\documentclass[sigconf]{acmart}

\AtBeginDocument{%
  \providecommand\BibTeX{{%
    \normalfont B\kern-0.5em{\scshape i\kern-0.25em b}\kern-0.8em\TeX}}}

\setcopyright{acmlicensed}

\copyrightyear{2026}
\acmYear{2026}
\setcopyright{cc}
\setcctype{by}
\acmConference[RecSys '26]{20th ACM Conference on Recommender Systems}{September 27-October 02, 2026}{Minneapolis, MN, USA}
\acmBooktitle{20th ACM Conference on Recommender Systems (RecSys '26), September 27-October 02, 2026, Minneapolis, MN, USA}
\acmDOI{10.1145/3773078.3831892}
\acmISBN{979-8-4007-2284-4/2026/09}

\usepackage[ruled,vlined,linesnumbered,commentsnumbered]{algorithm2e}
\usepackage{amsthm}
\usepackage{subcaption}

\usepackage{enumitem}

\newtheorem{theorem}{Theorem}

\theoremstyle{definition}
\newtheorem{definition}[theorem]{Definition}

\begin{document}


\title[Embedding Subspace Partitioning]{%
  Embedding Subspace Partitioning for Dynamic%
  \texorpdfstring{\\}{ }%
  Multi-Objective Retrieval%
}

\author{Shaobo Zhang}
\orcid{0009-0004-1725-416X}
\affiliation{%
  \institution{LinkedIn}
  \city{Mountain View}
  \state{CA}
  \country{USA}
}
\email{shaozhang@linkedin.com}

\author{Alice Leung}
\orcid{0009-0000-4931-9977}
\affiliation{%
  \institution{LinkedIn}
  \city{Mountain View}
  \state{CA}
  \country{USA}
}
\email{alleung@linkedin.com}

\author{Yunxiang Ren}
\orcid{0000-0003-2158-9451}
\affiliation{%
  \institution{LinkedIn}
  \city{Mountain View}
  \state{CA}
  \country{USA}
}
\email{yunren@linkedin.com}

\author{Ping Liu}
\orcid{0000-0002-0866-8801}
\affiliation{%
  \institution{LinkedIn}
  \city{Mountain View}
  \state{CA}
  \country{USA}
}
\email{piliu@linkedin.com}

\author{Yuchin Juan}
\orcid{0009-0006-0956-0575}
\affiliation{%
  \institution{LinkedIn}
  \city{Mountain View}
  \state{CA}
  \country{USA}
}
\email{yjuan@linkedin.com}

\author{Qianqi Shen}
\orcid{0000-0002-9323-6404}
\affiliation{%
  \institution{LinkedIn}
  \city{Mountain View}
  \state{CA}
  \country{USA}
}
\email{qishen@linkedin.com}

\author{Benjamin Le}
\orcid{0009-0001-6096-478X}
\affiliation{%
  \institution{LinkedIn}
  \city{Mountain View}
  \state{CA}
  \country{USA}
}
\email{ble@linkedin.com}

\author{Jianqiang Shen}
\orcid{0009-0000-5258-6523}
\affiliation{%
  \institution{LinkedIn}
  \city{Mountain View}
  \state{CA}
  \country{USA}
}
\email{jshen@linkedin.com}

\author{Chengming Jiang}
\orcid{0009-0009-0115-4603}
\affiliation{%
  \institution{LinkedIn}
  \city{Mountain View}
  \state{CA}
  \country{USA}
}
\email{cjiang@linkedin.com}

\author{Ko-Cheng Wang}
\orcid{0009-0004-4684-5620}
\affiliation{%
  \institution{LinkedIn}
  \city{Mountain View}
  \state{CA}
  \country{USA}
}
\email{lucwang@linkedin.com}

\author{Vidya Krishnamurthy}
\orcid{0009-0007-8725-7923}
\affiliation{%
  \institution{LinkedIn}
  \city{Mountain View}
  \state{CA}
  \country{USA}
}
\email{srkrishnamurthy@linkedin.com}

\author{Caleb Johnson}
\orcid{0009-0000-9735-3328}
\affiliation{%
  \institution{LinkedIn}
  \city{Mountain View}
  \state{CA}
  \country{USA}
}
\email{\ \ cajohnson@linkedin.com}

\author{Fedor Borisyuk}
\orcid{0009-0005-8171-7656}
\affiliation{%
  \institution{LinkedIn}
  \city{Mountain View}
  \state{CA}
  \country{USA}
}
\email{fborisyuk@linkedin.com}

\author{Luke Simon}
\orcid{0009-0004-4560-6361}
\affiliation{%
  \institution{LinkedIn}
  \city{Mountain View}
  \state{CA}
  \country{USA}
}
\email{lsimon@linkedin.com}

\author{Jingwei Wu}
\orcid{0009-0000-5751-9389}
\affiliation{%
  \institution{LinkedIn}
  \city{Mountain View}
  \state{CA}
  \country{USA}
}
\email{jingwu@linkedin.com}

\author{Wenjing Zhang}
\orcid{0009-0000-6510-1222}
\affiliation{%
  \institution{LinkedIn}
  \city{Mountain View}
  \state{CA}
  \country{USA}
}
\email{wzhang@linkedin.com}

\renewcommand{\shortauthors}{Zhang et al.}

\begin{abstract}
Modern industrial recommender systems must optimize across competing objectives, balancing semantic relevance with business metrics such as engagement and revenue. While bi-encoders dominate large-scale retrieval due to their efficiency, they collapse these heterogeneous signals into a single static embedding space. This design creates a fundamental limitation: once trained, the retriever cannot adapt to shifting objective priorities at serving time without retraining. Moreover, joint optimization with multi-objective losses often induces interference between objectives, leading to suboptimal trade-offs.
We propose \emph{Embedding Subspace Partitioning} (\emph{ESP}), a retrieval framework that decomposes the embedding into task-aware subspaces and replaces the single dot product with a weighted sum of per-subspace similarities, whose weights are tunable at serving time. For Transformer bi-encoders, ESP uses the model's native end-of-sequence token as a segment delimiter, with segment-aware attention masking and position encoding resets to guarantee subspace isolation in a single forward pass. Serving is performed via GPU-accelerated exhaustive kNN over one concatenated index, eliminating the need for per-objective Approximate Nearest Neighbor (ANN) infrastructure required by multi-head approaches.
We evaluate ESP on an open-source benchmark built from MS~MARCO~\cite{nguyen2016ms}. A single ESP model traces a broad Pareto frontier, consistently outperforming strong multi-task baselines across diverse operating points.
In LinkedIn's job matching platform (70M+ weekly users), ESP enabled dynamic retrieval reconfiguration and delivered significant key business metric lifts.
\end{abstract}

\begin{CCSXML}
<ccs2012>
<concept>
<concept_id>10002951.10003317.10003338.10003343</concept_id>
<concept_desc>Information systems~Learning to rank</concept_desc>
<concept_significance>500</concept_significance>
</concept>
<concept>
<concept_id>10002951.10003260.10003261.10003271</concept_id>
<concept_desc>Information systems~Personalization</concept_desc>
<concept_significance>500</concept_significance>
</concept>
<concept>
<concept_id>10010147.10010178.10010205</concept_id>
<concept_desc>Computing methodologies~Search methodologies</concept_desc>
<concept_significance>500</concept_significance>
</concept>
</ccs2012>
\end{CCSXML}

\ccsdesc[500]{Information systems~Learning to rank}
\ccsdesc[500]{Information systems~Personalization}
\ccsdesc[500]{Computing methodologies~Search methodologies}

\keywords{Retrieval, Multi-Objective Optimization, Embedding, Task-Aware Subspace,  Bi-Encoder}

\maketitle

\section{Introduction}
\label{sec:intro}

Industrial retrievers rarely serve a single notion of relevance. For example, LinkedIn's job matching platform must produce a ranked candidate set that simultaneously captures semantic fit between members and jobs, predicts member engagement, and aligns with the marketplace's revenue objectives. Similar multi-objective trade-offs arise across recommendation domains, including e-commerce search (semantic relevance vs.\ brand and attribute matching) and news feeds (topical relevance vs.\ freshness and diversity). As a result, the first-stage retriever must generate candidates that balance multiple, and often competing, optimization objectives.

The prevailing paradigm for large-scale retrieval relies on bi-encoder architectures~\cite{karpukhin2020dense, ni2021sentence, huang2020embedding}, which project each item into a monolithic $d$-dimensional embedding space and quantify relevance via a single scalar dot product. This design underpins their efficiency, enabling Approximate Nearest Neighbor (ANN)~\cite{jegou2010product, huang2020embedding} indexing and low-latency serving at scale~\cite{li2021embedding, agarwal2024omnisearchsage}. However, the same design imposes a critical limitation in multi-objective settings: every signal must be packed into one geometry. Practitioners typically respond by (a)~training one embedding with a weighted multi-objective loss~\cite{zhang2022uni, guo2025gpu, li2021embedding}, which bakes the trade-off into training and forces a retrain on every priority change, or (b)~running a separate retrieval stack per objective and merging downstream, which fragments infrastructure linearly. We call this the \emph{no-knob} problem: the single-embedding bi-encoder gives operators no in-product dial between objectives such as semantic fit and revenue.

We posit that this is fundamentally architectural rather than algorithmic. Building on recent capacity bounds for bi-encoders~\cite{luo2025theoretical}, we show (Section~\ref{sec:theory}) that under conflicting objectives a shared embedding's per-objective capacity degrades with both the number and severity of conflicts; gradient-surgery methods such as PCGrad~\cite{yu2020gradient} mitigate but cannot remove the loss, because the bottleneck is representational, not optimizer-side.

We propose \textbf{Embedding Subspace Partitioning (ESP)}: partition the embedding into task-aware subspaces, each responsible for one objective, and replace the single dot product with a weighted sum of per-subspace dot products whose weights are tunable at serving time. For Transformer bi-encoders, we reuse the model's native end-of-sequence token as a segment delimiter, with entity-first ordering, segment-aware attention masking, and position-ID reset to guarantee subspace isolation in a single forward pass with no additional parameters; for DNN bi-encoders, ESP is realized via lightweight per-objective projection heads on a shared trunk.

At a glance this resembles multi-head architectures common in ranking, but applying multi-head to first-stage \emph{retrieval} is fundamentally harder: standard ANN indices such as IVFPQ~\cite{jegou2010product} cluster documents and approximate distances via residual quantization, which presupposes that the embedding is a coherent geometric object. A concatenation of $k$ subspace vectors is not: it is independent embeddings glued together, with no single neighborhood structure to cluster on. The alternative, $k$ independent ANN indices, scales infrastructure linearly with the number of objectives. ESP sidesteps both via GPU-accelerated exhaustive KNN~\cite{zhang2024scaling}: a model--hardware co-design that scores the weighted sum as a brute-force scan over one concatenated index, with weights applied on the request side as a serving-time knob. This approach has been deployed in LinkedIn's job matching platform for two years, serving 70M+ weekly active users; in a 7-day online A/B test, retuning the serving-time weights alone on a single trained model yielded +4.0\% revenue, +3.95\% job applications, and $-$8.95\% no-fit rate.

Our key contributions include the following:
\begin{itemize}
\item \textbf{A capacity argument} explaining why a single embedding's per-objective resolution degrades under conflicting objectives, motivating partitioning at the architecture level rather than at the loss level (Section~\ref{sec:theory}).
\item \textbf{ESP, an architecture-level realization of subspace partitioning} for both Transformer and DNN bi-encoders, scored by a single weighted dot product and served by GPU accelerated exhaustive KNN~\cite{zhang2024scaling}. ESP exposes the per-objective weights as a serving-time knob, letting operators trace the full Pareto frontier between objectives from a single trained model (Section~\ref{sec:method}).
\item \textbf{Empirical validation} on a multi-objective MS~MARCO~\cite{nguyen2016ms} benchmark with hard entity negatives (released open-source) and on LinkedIn's production system. ESP outperforms multi-loss and gradient-surgery baselines at every matched operating point and delivers the production gains above (Sections~\ref{sec:experiments} and~\ref{sec:production}).
\end{itemize}

\section{Related Work}
\label{sec:related}
\paragraph{Bi-Encoder Retrieval.}
Dense retrieval has evolved from early bi-encoder models such as DPR~\cite{karpukhin2020dense} and ANCE~\cite{xiong2020approximate} to more recent large-scale text encoders including E5~\cite{wang2024text} and GTE~\cite{li2023towards}. While these models achieve strong performance for single-objective retrieval, they fundamentally produce a single embedding per input. Existing approaches to multi-objective retrieval therefore either modify the training objective (discussed below) or alter the embedding architecture itself, which is the direction taken by ESP. Recent theoretical work by \citet{luo2025theoretical} establishes dimension-dependent capacity bounds for single-objective bi-encoders; in Section~\ref{sec:theory}, we extend this analysis to the multi-objective setting.
\paragraph{Multi-Objective Optimization in Retrieval.}
Prior work falls into three groups, all of which optimize \emph{training} while leaving the serving-time embedding architecture unchanged.

\emph{Multi-Task Single-Head (MTSH and other single-vector variants).} Uni-Retriever~\cite{zhang2022uni} and Taobao's system~\cite{li2021embedding} train one embedding under a weighted multi-objective loss; Baidu Search~\cite{guo2025gpu} compresses objectives into scalar scores before ANN; JRC~\cite{sheng2023joint} balances recall and calibration via dual member embeddings. Multi-Aspect Dense Retrieval (MADR)~\cite{kong2022madr} supervises per-aspect representations in intermediate layers but \emph{fuses them into a single output vector} via a learned aggregator, preserving ANN compatibility at the cost of  serving-time knob. These methods collapse the trade-off into training-time choices, produce a single served embedding, and suffer geometric interference that loss weighting cannot eliminate.

\emph{Multi-Task Multi-Head (MTMH).} MTMH~\cite{wang2025mtmh} adds $k$ task-specific projection heads on a shared encoder trunk, each producing a separate embedding. At serving time, each head requires its own ANN index with a post-retrieval merge, scaling infrastructure linearly with the number of objectives. ESP shares the partitioning principle but, as discussed in Section~\ref{sec:intro}, sidesteps the per-objective ANN problem via exhaustive KNN over one concatenated index~\cite{zhang2024scaling}, and adds Transformer-specific mechanisms.

\emph{Gradient conflict management.} PCGrad~\cite{yu2020gradient} projects conflicting task gradients during multi-task training, reducing interference but still producing a single embedding without a serving-time knob.
\paragraph{Other Multi-Task and Multi-Vector Approaches.}
Matryoshka Representation Learning (MRL)~\cite{kusupati2022matryoshka} partitions embeddings by \emph{resolution}---truncating trailing dimensions for efficiency---rather than by objective. Instruction-based methods such as LLM Embedder~\cite{zhang2023llmembedder} and Retrieval-GRPO~\cite{li2025retrievalgrpo} switch tasks via prompt or RL but produce one embedding per request, offering task selection rather than tunable trade-offs. OmniSearchSage~\cite{agarwal2024omnisearchsage} and Maillard et~al.~\cite{maillard2021multitask} apply multi-task losses without changing the embedding geometry. ColBERT~\cite{khattab2020colbert} achieves multi-aspect matching through per-token late interaction, at $O(L)$ vectors per document; ESP keeps storage at $k{=}2$--$3$ vectors, comparable to standard bi-encoders.


\section{Geometric Analysis of Multi-Objective Embedding}
\label{sec:theory}

We analyze the minimum embedding dimension required for a single bi-encoder to serve $k$ retrieval objectives simultaneously, and show that this minimum grows with a geometric notion of \emph{conflict} between objectives. 
While previous work~\cite{luo2025theoretical} focuses on the cardinality of distinct rankings in $\mathbb{R}^d$, we frame the multi-objective problem as one of subspace approximation. We derive a rigorous lower bound that the partitioned construction of ESP saturates in the worst case, providing an architectural guarantee of capacity.

\subsection{Problem Formulation}

\begin{definition}[Multi-Objective Bi-Encoder Retrieval]
\label{def:moo-bi}
Given a query set $\mathcal{Q}$, a document set $\mathcal{D}$, and $k$ relevance matrices $R_1, \ldots, R_k \in \mathbb{R}^{|\mathcal{Q}| \times |\mathcal{D}|}$, find encoders $f: \mathcal{Q} \to \mathbb{R}^d$ and $g: \mathcal{D} \to \mathbb{R}^d$ such that $S = f(\mathcal{Q})\,g(\mathcal{D})^\top$ approximates each $R_i$ within relative Frobenius error $\delta$:
\[
\|S - R_i\|_F \leq \delta \,\|R_i\|_F \qquad \forall i \in [k].
\]
\end{definition}

Since $\mathrm{rank}(S) \leq d$, the fundamental constraint is determining the smallest $d$ for which a single rank-$d$ matrix can simultaneously $\delta$-approximate the dominant subspaces of all $k$ objectives.

\subsection{Capacity Lower Bound for Shared Embeddings}
\label{sec:lower_bound}

For each $R_i = U_i \Sigma_i V_i^\top$, let the \emph{$\delta$-effective right subspace} $V_i^\delta \subseteq \mathbb{R}^{|\mathcal{D}|}$ be the span of the smallest set of right singular vectors capturing $1-\delta^2$ of the squared Frobenius norm. Let $r_i = \dim V_i^\delta$ denote the $\delta$-effective rank of objective $i$.

\begin{definition}[Subspace Conflict]
\label{def:conflict}
The pairwise conflict between objectives $i$ and $j$ is
\[
\varepsilon_{ij} \;:=\; 1 \;-\; \frac{\|P_{V_i^\delta}\,P_{V_j^\delta}\|_F^2}{\min(r_i, r_j)}\,,
\]
where $P_V$ is the orthogonal projector onto $V$ and $\|P_{V_i^\delta} P_{V_j^\delta}\|_F^2 = \sum_l \cos^2\theta_l$ is the sum of squared cosines of the principal angles $\{\theta_l\}$ between the two subspaces. Hence $\varepsilon_{ij} \in [0,1]$: $\varepsilon_{ij} = 0$ when one effective subspace is contained in the other (full alignment), and $\varepsilon_{ij} = 1$ when the subspaces are orthogonal.
\end{definition}

This metric captures conflict as an intrinsic representational property of the data geometry. It measures the principal-angle distance between the dominant ranking manifolds of two objectives, quantifying how much of one objective's signal already lies in the other's geometry, independent of optimizer or loss.

\begin{theorem}[Multi-Objective Capacity Lower Bound]
\label{thm:shared}
Any rank-$d$ score matrix $S$ satisfying $\|S - R_i\|_F \leq \delta\,\|R_i\|_F$ for all $i \in [k]$ must satisfy
\[
d \;\geq\; \dim\!\left(\sum_{i=1}^k V_i^\delta\right).
\]
Under the symmetric model where $r_i = r$ and $\varepsilon_{ij} = \varepsilon$ for all $i \neq j$, this simplifies to
\[
d \;\geq\; r \cdot \bigl(1 + (k-1)\varepsilon\bigr) \cdot \bigl(1 - O(\delta)\bigr).
\]
Equivalently, for a fixed budget $d$ the per-objective $\delta$-effective rank that can be jointly preserved is bounded above by
\[
r_{\mathrm{eff}}(d, k, \varepsilon) \;\leq\; \frac{d}{1 + (k-1)\varepsilon}\,.
\]
\end{theorem}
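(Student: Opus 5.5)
The first claim is a row-space containment argument, and the plan is to establish it before turning to the symmetric formula. Let $W = \mathrm{rowspace}(S) \subseteq \mathbb{R}^{|\mathcal{D}|}$, so $\dim W \le d$. For each $i$ write $R_i - S = E_i$ with $\|E_i\|_F \le \delta\|R_i\|_F$. Since $S P_W = S$, we get $\|R_i(I - P_W)\|_F = \|E_i(I-P_W)\|_F \le \|E_i\|_F \le \delta\|R_i\|_F$. In words, the column space of $P_W$ captures at least $1-\delta^2$ of the energy of $R_i$.

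The key step is to convert this energy statement into a statement about subspaces. The sharp fact available is Ky Fan / Eckart--Young: any subspace capturing $1-\delta^2$ of $\|R_i\|_F^2$ has dimension at least $r_i$, because $r_i$ is by definition the smallest such dimension. This gives only $d\ge \max_i r_i$. It does not give $W \supseteq V_i^\delta$, since $W$ may capture the energy with a tilted subspace.

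I would therefore interpret the bound the way the symmetric formula suggests: as holding up to $(1-O(\delta))$ factors. The approach is a perturbation argument. Decompose the captured energy along the singular directions of $R_i$. Directions in $V_i^\delta$ carry weights $\sigma_l^2$, and the tail outside $V_i^\delta$ carries at most $\delta^2\|R_i\|_F^2$. It follows that $\sum_{l\le r_i}\sigma_l^2 \|P_W v_l\|^2 \ge (1-2\delta^2)\|R_i\|_F^2$. Hence $\mathrm{tr}(P_W P_{V_i^\delta})$ is close to $r_i$, i.e.\ $\ge r_i(1-O(\delta))$, under a mild flatness assumption on the leading spectrum, namely that the $\sigma_l$ within $V_i^\delta$ are comparable. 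So $W$ nearly contains each $V_i^\delta$.

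\textbf{Main obstacle.} The hardest step is showing that near-containment of every $V_i^\delta$ forces $\dim W \gtrsim \dim(\sum_i V_i^\delta)$ up to $1-O(\delta)$. I would use $d \ge \mathrm{tr}(P_W) \ge \mathrm{tr}(P_W P_{\Sigma})$ with $P_\Sigma$ the projector onto the sum. Writing $V_1^\delta + \dots + V_k^\delta$ as a Gram--Schmidt chain, I would take new directions $N_j = V_j^\delta \ominus (V_1^\delta+\dots+V_{j-1}^\delta)$, each nearly contained in $W$. The difficulty is that nearly-contained non-orthogonal pieces can lose mass when orthogonalized. I would control this via the principal angles, accepting the $O(\delta)$ loss and ill-conditioning as part of the constant. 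If this fails, I would state the exact containment version, which requires $\delta=0$, and recover the $\delta>0$ case by continuity.

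\textbf{Symmetric case.} Here I would lower bound $\dim(\sum_i V_i^\delta)$ by $r(1+(k-1)\varepsilon)$. Adding $V_j^\delta$ to the span of the previous subspaces contributes at least $r - \|P_{V_j}P_{\text{prev}}\|_F^2$ new dimensions. The obvious pairwise union bound gives something weaker than claimed. Instead I would use the Gram/trace inequality $\dim(\sum V_i) \ge (\sum_i \mathrm{tr} P_{V_i})^2/\|\sum_i P_{V_i}\|_F^2$, which follows from Cauchy--Schwarz on the eigenvalues of $\sum_i P_{V_i}$. Evaluating it gives $\dim \ge k^2 r^2/(kr + k(k-1)r(1-\varepsilon)) = kr/(1+(k-1)(1-\varepsilon))$. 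This form matches $r(1+(k-1)\varepsilon)$ at $\varepsilon\in\{0,1\}$ but does not dominate it at intermediate $\varepsilon$, so I would need to check it against the stated expression, possibly via a convexity comparison or the sequential argument. Finally, inverting $d \ge r_{\mathrm{eff}}(1+(k-1)\varepsilon)$ gives the stated bound on $r_{\mathrm{eff}}$.
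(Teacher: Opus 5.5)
\textbf{Verdict: there are two genuine gaps.} The step you call the main obstacle cannot be done, because the first claim of the theorem is false. The symmetric formula does not follow from pairwise conflicts alone; it needs an extra structural assumption.

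\textbf{First claim.} Your first step, $\|R_i(I-P_W)\|_F\le\delta\|R_i\|_F$, is exactly the paper's, and your suspicion about the next step is justified. The display $d\ge\dim\sum_iV_i^\delta$ is false, even with a $(1-O(\delta))$ slack. Since a single $S$ serves every objective, $\|R_i-R_j\|_F\le\delta(\|R_i\|_F+\|R_j\|_F)$. So distinct effective subspaces can only arise from small rotations (or threshold effects), and small rotations inflate $\dim\sum_iV_i^\delta$ at no cost in rank. Concretely, take $|\mathcal{Q}|=1$, $|\mathcal{D}|=2$, $R_1=(1,0)$, $R_2=(\cos\theta,\sin\theta)$ and $S=(\cos\tfrac{\theta}{2},\sin\tfrac{\theta}{2})$ with $0<\theta\le2\delta$. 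Then $r_1=r_2=1$ and $\dim(V_1^\delta+V_2^\delta)=2$, yet $\mathrm{rank}\,S=1$ and $\|S-R_i\|_F=2\sin\tfrac{\theta}{4}\le\delta$. So no bound of the form $d\ge 2(1-C\delta)$ survives $\delta\to0$.

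Consequently, the mass lost when you orthogonalize nearly-contained pieces cannot be absorbed into a constant. The continuity fallback also fails, in the wrong direction. $\dim\sum_iV_i$ is the rank of a matrix of concatenated bases, hence lower semicontinuous, so it can jump up under arbitrarily small perturbation. At $\delta=0$ the claim is trivial anyway, since then every $R_i$ equals $S$.

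The paper's sketch gets past this point only by invoking Wedin's theorem to obtain containment ``up to an $O(\delta)$ angle'', and then asserting $\mathrm{row}(S)\supseteq\sum_iV_i^\delta$ outright. Wedin's theorem also needs a spectral gap, the counterpart of your flatness assumption. For general $R_i$, what actually follows is your Ky Fan conclusion $d\ge\max_i r_i$.

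\textbf{Symmetric case.} From the pairwise conflicts alone, $r(1+(k-1)\varepsilon)$ is not a lower bound on $\dim\sum_iV_i$, so no convexity comparison will upgrade your trace inequality. Three lines in $\mathbb{R}^2$ at mutual angles of $60^\circ$ have $r=1$ and $\varepsilon=3/4$. The formula predicts $5/2$, but the sum has dimension $2$, and your bound $kr/(1+(k-1)(1-\varepsilon))=2$ is tight there.

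The paper obtains the formula by assuming structure: $V_i^\delta=C\oplus N_i$, with a common core $C$ of dimension $r(1-\varepsilon)$ and private parts $N_i$ that are mutually orthogonal and orthogonal to $C$. The count $r(1-\varepsilon)+kr\varepsilon$ is then immediate.

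Once you adopt that model, your own tools finish the job and bypass the false first claim. The increments of your Gram--Schmidt chain are exactly the already-orthogonal $N_i$. With $P_\Sigma=P_C+\sum_iP_{N_i}$,
\[d\;\ge\;\mathrm{tr}(P_WP_\Sigma)\;=\;\sum_{i=1}^k\mathrm{tr}(P_WP_{V_i^\delta})-(k-1)\,\mathrm{tr}(P_WP_C)\;\ge\;kr(1-\eta)-(k-1)\,r(1-\varepsilon).\]
Here $\mathrm{tr}(P_WP_{V_i^\delta})\ge r(1-\eta)$ with $\eta=2\delta^2/c$ is your flatness estimate made precise, assuming $\sigma_r(R_i)^2\ge c\|R_i\|_F^2/r$. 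This yields $d\ge r(1+(k-1)\varepsilon)-kr\eta$, and the $r_{\mathrm{eff}}$ bound follows by inversion.

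One caution about this model. Restrict $R_i-R_j$ to $N_i$, taking $\|R_i\|_F\ge\|R_j\|_F$. There $R_i$ carries energy at least $c\varepsilon\|R_i\|_F^2$, while $R_j$ carries at most $\delta^2\|R_j\|_F^2$. This forces $\sqrt{c\varepsilon}\le3\delta$. So under the theorem's own hypotheses, the bound exceeds $d\ge r$ by only $O(kr\delta^2/c)$.
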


\begin{proof}[Proof sketch]
Let $P$ be the orthogonal projector onto $\mathrm{row}(S)^\perp$. Since $SP = 0$,
\[
\|R_i P\|_F \;=\; \|(R_i - S)\,P\|_F \;\leq\; \|R_i - S\|_F \;\leq\; \delta \,\|R_i\|_F\,,
\]
so the component of $R_i$ orthogonal to $\mathrm{row}(S)$ captures at most $\delta^2 \|R_i\|_F^2$. Combined with the definition of the $\delta$-effective subspace and Wedin's perturbation theorem~\cite{wedin1972}, this forces $V_i^\delta \subseteq \mathrm{row}(S)$ up to an $O(\delta)$ angle. Hence $\mathrm{row}(S) \supseteq \sum_i V_i^\delta$, so $d = \mathrm{rank}(S) \geq \dim(\sum_i V_i^\delta)$.

For the symmetric simplification, decompose each $V_i^\delta$ as a \emph{shared core} of dimension $r(1-\varepsilon)$ aligned across all objectives plus a \emph{private complement} of dimension $r\varepsilon$ orthogonal across objectives. Then $\dim(\sum_i V_i^\delta) = r(1-\varepsilon) + k\cdot r\varepsilon = r\bigl(1+(k-1)\varepsilon\bigr)$, and the bound on $r_{\mathrm{eff}}$ follows by inversion.
\end{proof}

This is the multi-objective analogue of the single-objective rank bound: representing $k$ potentially-conflicting objectives requires the embedding to span the union of their effective subspaces, which grows linearly in both $k$ and the geometric conflict $\varepsilon$.

\subsection{Constructive Upper Bound via Partitioning}

\begin{theorem}[Partitioning Saturates the Worst-Case Bound]
\label{thm:partitioned}
By partitioning the embedding into $k$ orthogonal blocks of sizes $\{d_i\}$ where $d_i \geq r_i$ and $\sum_i d_i = d$, and dedicating the $i$-th block to $R_i$, the resulting score matrix satisfies $\|S - R_i\|_F \leq \delta\,\|R_i\|_F$ for all $i$ at total dimension $d = \sum_i r_i$. Importantly, this is \emph{independent of the conflict pattern $\{\varepsilon_{ij}\}$}.
\end{theorem}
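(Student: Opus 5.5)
The plan is to give an explicit construction from truncated SVDs and then check the error block by block. The first step fixes how the theorem is read, since ESP scores by a weighted sum of per-subspace dot products. For a serving weight vector $w\in\mathbb{R}^k$, the score matrix is $S(w)=\sum_{i=1}^k w_i\, f_i(\mathcal{Q})\,g_i(\mathcal{D})^\top$. Here $f=(f_1,\dots,f_k)$ and $g=(g_1,\dots,g_k)$ are the concatenations of the block encoders, with $f_i,g_i$ taking values in $\mathbb{R}^{d_i}$. The statement ``the resulting score matrix satisfies $\|S-R_i\|_F\le\delta\|R_i\|_F$'' is then read as $S=S(e_i)$, the score obtained when the knob is set to objective $i$. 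The key structural fact is that the inner product of concatenated vectors splits as a sum of blockwise inner products. So $S(w)$ is exactly the weighted sum of the per-block matrices $S_i:=f_i(\mathcal{Q})g_i(\mathcal{D})^\top$, with no cross-block terms.

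Second, I build each block. Write $R_i=U_i\Sigma_iV_i^\top$ and keep the top $r_i$ singular triples, which is exactly the triples spanning $V_i^\delta$. I define $f_i(q)$ as the $q$-th row of $U_i^{(r_i)}(\Sigma_i^{(r_i)})^{1/2}$ and $g_i(x)$ as the $x$-th row of $V_i^{(r_i)}(\Sigma_i^{(r_i)})^{1/2}$. If $d_i>r_i$, both are zero-padded to length $d_i$. Then $S_i=R_i^{(r_i)}$, the rank-$r_i$ truncation of $R_i$. By the definition of the $\delta$-effective right subspace, the discarded singular values carry at most $\delta^2\|R_i\|_F^2$ of the squared Frobenius norm. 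Hence $\|S_i-R_i\|_F^2=\sum_{l>r_i}\sigma_{i,l}^2\le\delta^2\|R_i\|_F^2$, which is the Eckart--Young tail computation. Combined with the block decomposition, this gives $\|S(e_i)-R_i\|_F\le\delta\|R_i\|_F$ for every $i$. The total dimension is $\sum_i d_i$, and taking $d_i=r_i$ gives $d=\sum_i r_i$. Conflict independence is then immediate: block $i$ is built from $R_i$ alone, so no $\varepsilon_{ij}$ ever enters, and orthogonality of the blocks comes for free from the coordinate partition.

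Third, I connect the construction to Theorem~\ref{thm:shared} to justify the word ``saturates''. In the worst case $\varepsilon_{ij}=1$ the $V_i^\delta$ are mutually orthogonal, so $\dim(\sum_i V_i^\delta)=\sum_i r_i$. The lower bound therefore equals the dimension achieved by the partitioned construction. In the symmetric model this is the $\varepsilon=1$ case $r(1+(k-1)\cdot 1)=kr$.

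The main obstacle is the interpretive step, not the calculation. If $S$ were required to be one fixed matrix approximating all $R_i$ simultaneously, for example with uniform weights, then each block $S_j$ with $j\ne i$ would add error of order $\|R_j\|_F$ to the approximation of $R_i$, and the claim would fail. I would therefore make explicit that the guarantee is per operating point $w=e_i$, which matches the serving-time-knob semantics. I would also note that for general $w$ the construction realizes the target $\sum_i w_iR_i$ with error at most $\delta\sum_i|w_i|\,\|R_i\|_F$, by the triangle inequality.
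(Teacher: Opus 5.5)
Your proposal is correct and follows the paper's own argument, made explicit: a rank-$r_i$ truncated SVD in each coordinate block, whose discarded tail is at most $\delta^2\|R_i\|_F^2$ by the definition of $V_i^\delta$, with the coordinate partition eliminating all cross-block terms. Your reading of $S$ as the operating-point score $S(e_i)$ is what the paper's sketch leaves implicit, and it is genuinely necessary: for strongly conflicting objectives and small $\delta$, no single matrix of any rank can be $\delta$-close to every $R_i$. To keep your ``saturates'' step consistent with that reading, note that Theorem~\ref{thm:shared}'s projection argument transfers to the knob model, because every $S(w)$ has row space inside $\mathrm{col}(g(\mathcal{D}))$, which gives $d\ge\mathrm{rank}\,g(\mathcal{D})\ge\dim(\sum_i V_i^\delta)$ up to the same $O(\delta)$ slack.
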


\begin{proof}
Each block is a rank-$d_i$ approximator of $R_i$ alone. By Eckart--Young~\cite{eckart1936}, $d_i \geq r_i$ guarantees $\delta$-accurate approximation within that block, and orthogonality of blocks ensures objectives do not interfere across the partition.
\end{proof}

\paragraph{Implication for ESP.}
The advantage of ESP is clear based on Theorems~\ref{thm:shared} and~\ref{thm:partitioned}: a shared embedding's required dimension $r(1+(k-1)\varepsilon)$ grows with the conflict $\varepsilon$, while partitioning's requirement $\sum_i r_i$ remains constant. The two bounds coincide only in the worst case ($\varepsilon \to 1$). At intermediate conflicts a shared embedding can in principle attain the smaller bound, but only if training successfully discovers and aligns the data's shared and private components. Partitioning trades this potential dimensional efficiency for an \emph{architectural guarantee}: the decomposition is built in, not learned. ESP's per-subspace structure (Section~\ref{sec:method}) is the direct realization of this guarantee, ensuring each objective receives a conflict-free slice of representation capacity, regardless of how the data's relevance subspaces overlap. Our experiments (Section~\ref{sec:experiments}) confirm that shared training in practice does not close the gap even with gradient-surgery methods such as PCGrad, leaving substantial capacity unused.

\subsection{Estimating Conflict in Practice}
\label{sec:conflict_estimate}

Directly computing $\varepsilon_{ij}$ requires the SVD of each $R_i$, which is computationally prohibitive at industrial scale. We use Kendall's $\tau$ between rankings induced by each objective on a held-out query sample as a tractable proxy:
\[
\hat{\varepsilon}_{ij} \;=\; 1 - \tau\bigl(\mathrm{rank}_i,\,\mathrm{rank}_j\bigr).
\]
Concretely, $\tau$ is computed as the average per-query Kendall correlation between rankings induced by $R_i$ and $R_j$ on the eval set's 1{,}359 queries over the $\sim$39K-passage corpus (Section~\ref{sec:retrieval_eval}). The proxy is \emph{directionally aligned} with subspace conflict: when subspaces overlap heavily, the dominant relevance directions agree and rankings track closely ($\tau \to 1$, $\hat\varepsilon \to 0$); when relevance subspaces orthogonalize, ranking correlations vanish ($\tau \to 0$, $\hat\varepsilon \to 1$). The relationship is not a strict equality: singular-value differences can induce ranking disagreement even at $\varepsilon = 0$. We treat $\hat\varepsilon$ as an order-of-magnitude indicator throughout the empirical sections.

\subsection{Optimal Dimension Allocation}
\label{sec:dim_alloc}

When the budget $d$ falls short of $\sum_i r_i$, partitioning cannot fully accommodate all objectives and dimensions must be allocated.

\begin{theorem}[Allocation by Effective Rank]
\label{thm:allocation}
Subject to $\sum_i d_i = d$ and $d_i \geq 1$, the allocation that maximizes the worst-case Frobenius coverage $\min_i \|P_{d_i}(R_i)\|_F^2 / \|R_i\|_F^2$ under uniform spectral decay across objectives is
\[
d_i^* \;=\; d \cdot \frac{r_i}{\sum_{j=1}^k r_j}\,.
\]
\end{theorem}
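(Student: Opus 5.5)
The plan is to make ``uniform spectral decay across objectives'' precise enough that each objective's coverage depends only on the ratio $d_i/r_i$, and then to solve the resulting max--min allocation problem by a balancing argument. Concretely, I will model uniform spectral decay as follows: there is a common nondecreasing, concave coverage profile $\phi:[0,\infty)\to[0,1]$ such that the Eckart--Young coverage of a rank-$d_i$ truncation satisfies
\[
\frac{\|P_{d_i}(R_i)\|_F^2}{\|R_i\|_F^2} \;=\; \phi\!\left(\frac{d_i}{r_i}\right),
\qquad \phi(1) = 1-\delta^2 .
\]
This says every objective's normalized spectrum is the same curve stretched horizontally by its effective rank $r_i$. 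The normalization $\phi(1)=1-\delta^2$ is exactly the definition of the $\delta$-effective rank. Concavity and monotonicity follow because squared singular values are sorted in decreasing order. By Eckart--Young, as invoked in Theorem~\ref{thm:partitioned}, $P_{d_i}(R_i)$ is the best rank-$d_i$ approximation within block $i$. Since the blocks are orthogonal, the objectives do not interact, and the problem separates into $\max_{\{d_i\}} \min_i \phi(d_i/r_i)$ subject to $\sum_i d_i = d$.

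Next comes the balancing (water-filling) step. Suppose an allocation has $d_a/r_a < d_b/r_b$ for some pair. Moving a small amount $\eta$ of dimension from $b$ to $a$ does not decrease the minimum, because $\phi$ is nondecreasing. Repeating this, any optimum can be brought to one where all ratios $d_i/r_i$ are equal. This gives $d_i/r_i = c$ with $c\sum_j r_j = d$, hence $d_i^* = d\, r_i/\sum_j r_j$. To show optimality directly: any other feasible allocation has some $i$ with $d_i/r_i < d/\sum_j r_j$ by averaging, since $\sum_i r_i (d_i/r_i) = d$. Then $\phi(d_i/r_i) \le \phi(d/\sum_j r_j)$, so the worst-case coverage cannot exceed that of $d^*$. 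I would also check that when $d=\sum_i r_i$ the formula recovers $d_i=r_i$, consistent with Theorem~\ref{thm:partitioned}, and that $d_i^*\ge 1$ holds whenever $d\ge \sum_j r_j/\min_j r_j$.

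The main obstacle is integrality together with the modelling assumption itself. The dimensions $d_i$ must be integers, while $d\,r_i/\sum_j r_j$ generally is not. I would treat the problem as a continuous relaxation, with $\phi$ piecewise-linearly interpolated between integer ranks, and state the result for that relaxation. I would then note that rounding changes each ratio by at most $1/r_i$, so coverage degrades by at most one singular-value increment. The weaker point is that the conclusion depends on the ``stretch'' form of uniform decay. If objectives shared absolute decay rates instead, for example $\sigma_{i,l}^2 \propto \rho^{l}$ with a common $\rho$, the optimal allocation would be equal splitting, not proportional. I would therefore state the stretch model explicitly as the meaning of the hypothesis.
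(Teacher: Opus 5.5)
Your argument is correct and shares the paper's core idea. Under the hypothesis, each objective's coverage depends only on its truncation level $d_i/r_i$, and the max--min is achieved by equalizing these levels. What differs is the generality of the model and the rigor of the last step.

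The paper fixes a flat spectrum ($r_i$ equal squared singular values and nothing beyond), so coverage is $\min(d_i/r_i,1)$, and it simply asserts that equalization is optimal. Your stretched profile $\phi(d_i/r_i)$ contains this as the case $\phi(x)=\min(x,1)$. The $\phi(1)=1$ versus $1-\delta^2$ normalization mismatch is absorbed by the paper's ``$\approx$''. Your weighted-average identity $\sum_i r_i\,(d_i/r_i)=d$ supplies the optimality proof that the sketch omits. It needs only monotonicity of $\phi$, so the water-filling paragraph and the concavity claim are unnecessary.

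You can also justify the uniqueness implicit in ``the allocation'' for the regime the paper intends, $d<\sum_j r_j$. Minimality of $r_i$ makes the top $r_i$ singular values nonzero, so $\phi$ is strictly increasing on $[0,1]$. Your inequality is then strict for every non-proportional allocation. By contrast, for $d>\sum_j r_j$ in the flat model the optimum is not unique.

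One small correction to your closing remark. With a common geometric decay $\rho^l$ over spectra of equal length, the normalized spectra coincide, so all $r_i$ are equal and equal splitting \emph{is} proportional splitting. A genuine counterexample needs spectra of different lengths. Your underlying point, that the result depends on the ``stretch'' reading of uniform decay, is still sound.
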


\begin{proof}[Proof sketch]
Truncating $R_i$ at rank $d_i$ leaves residual $\sum_{l > d_i} \sigma_l^2$. Under uniform spectral decay $\sigma_l^2 / \|R_i\|_F^2 \approx \tfrac{1}{r_i}\,\mathbb{1}[l \leq r_i]$, the residual is $1 - d_i / r_i$, and worst-case coverage is maximized when each objective's truncation level $d_i / r_i$ is equalized and gives the proportional allocation.
\end{proof}

\paragraph{Resolution gap in practice.}
Different objectives can have vastly different effective ranks. A \emph{low-resolution} objective such as categorical entity matching has small $r_i$: with $\sim 780$ unique entities, sphere-packing implies $r_i = O(\log 780) \approx 10$ dimensions suffice. A \emph{high-resolution} ranking objective has large $r_i$ scaling with corpus complexity. In our experiments (Section~\ref{sec:dim_sweep}), entity matching saturates at $d_{\mathrm{ent}} = 32$ (3\% of $H = 1024$) while semantic quality requires $d_{\mathrm{sem}} \gtrsim 750$. The resulting $\sim1{:}20$ resolution gap has an important systems implication: auxiliary business objectives can often be allocated compact subspaces, leaving most dimensions available for high-fidelity semantic retrieval. Objectives involving highly confusable attributes (e.g., near-duplicate product SKUs) require higher resolution and would correspondingly consume larger subspaces.

\section{Method: Embedding Subspace Partitioning}
\label{sec:method}

Motivated by the geometric analysis of multi-objective retrieval, we introduce Embedding Subspace Partitioning (ESP). ESP structures the embedding space into discrete, task-aware subspaces that enable controllable, per-objective scoring during inference. ESP is architecture-agnostic; we present two instantiations: (1)~for Transformer-based ESP, using the internal representational capacity of autoregression via specialized attention masking; (2)~for DNN-based ESP, using per-objective projection heads on a shared trunk. The variant is deployed in LinkedIn's production retrieval system.

The two instantiations are evaluated on different objective pairs: semantic vs.\ entity-attribute matching for Transformer-ESP (Section~\ref{sec:experiments}, on an open-source MS~MARCO benchmark~\cite{nguyen2016ms}) and engagement vs.\ revenue for DNN-ESP (Section~\ref{sec:production}, in production). This reflects the labels naturally available in each setting. Both are instances of the same multi-objective retrieval problem, and the partitioning principle is agnostic to which objectives are partitioned.

\subsection{ESP Architecture}
\label{sec:esp_arch}

ESP partitions the embedding into $k$ orthogonal subspaces, with the $i$-th subspace producing a sub-embedding $\mathbf{e}_i \in \mathbb{R}^{d_i}$ dedicated to the $i$-th objective. The full embedding is the concatenation $\mathbf{e} = [\mathbf{e}_1; \ldots; \mathbf{e}_k] \in \mathbb{R}^{d}$ with $d = \sum_i d_i$. Each $\mathbf{e}_i$ is L2-normalized so that $\langle \mathbf{e}_i, \mathbf{e}_i' \rangle \in [-1, 1]$; per-objective dimensions $\{d_i\}$ are allocated proportionally to per-objective effective rank, following Theorem~\ref{thm:allocation}.

\paragraph{Scoring with a serving-time knob.}
The retrieval score between query $\mathbf{q}$ and document $\mathbf{d}$ is a weighted sum of per-subspace dot products:
\begin{equation}
\label{eq:esp_score}
S(\mathbf{q}, \mathbf{d}) = \sum_{i=1}^{k} w_i \cdot \langle \mathbf{e}_i, \mathbf{e}_i' \rangle.
\end{equation}
The weights $\{w_i\}$ are applied on the request side at \emph{serving time}, not baked into the model: a single trained ESP model can dynamically trace the full Pareto frontier between objectives by varying $\{w_i\}$, with no retraining required. When $w_i = 1$ for all $i$, Equation~\ref{eq:esp_score} reduces to a standard dot product over the concatenated embedding $\mathbf{e}$, preserving full compatibility with existing exhaustive-KNN retrieval infrastructure~\cite{zhang2024scaling}.

\paragraph{Per-subspace training.}
Each subspace is trained with its own per-objective loss; the choice of loss family is unconstrained by the architecture. Contrastive losses (e.g.\ InfoNCE for ranking objectives), regression losses (e.g.\ MSE for revenue prediction), and classification losses (e.g.\ BCE for binary fitness) can be combined freely across subspaces. LinkedIn's production system (Section~\ref{sec:production}) uses all three. For ranking tasks, we utilize the InfoNCE loss:
\begin{equation}
\mathcal{L}_i = -\frac{1}{N} \sum_{n=1}^{N} \log \frac{\exp(\langle \mathbf{e}_i^{(n)}, \mathbf{e}_i'^{+(n)} \rangle / \tau)}{\sum_{j=1}^{B} \exp(\langle \mathbf{e}_i^{(n)}, \mathbf{e}_i'^{(j)} \rangle / \tau)},
\end{equation}
where $\tau$ is the temperature parameter and $B$ is the batch size. The total training loss is
\begin{equation}
\mathcal{L} = \sum_{i=1}^{k} \alpha_i \cdot \mathcal{L}_i.
\end{equation}
Because $\mathcal{L}_i$ depends only on $\mathbf{e}_i$, we have $\partial \mathcal{L}_i / \partial \mathbf{e}_j = 0$ for $j \neq i$: each loss updates only its own subspace's representation direction without gradient interference, so no ``compromise direction'' between objectives can arise at the output. The shared encoder parameters still receive mixed gradients from all $\mathcal{L}_i$, but the capacity guarantee of Theorem~\ref{thm:partitioned} rests on the output partition, not on the trunk; training-time weights $\{\alpha_i\}$ therefore affect convergence speed rather than geometric capacity.

\subsection{Realizing the Subspace Partition}
\label{sec:realizations}

The architecture above leaves open one engineering question: how are the per-subspace embeddings $\{\mathbf{e}_i\}$ computed for a given encoder type? We describe two realizations that produce the structurally orthogonal output partition required by Theorem~\ref{thm:partitioned}.

\subsubsection{Transformer Bi-Encoders: EOS-Delimited Segments}
\label{sec:esp_llm}

\paragraph{Input formatting.}
We use the model's \emph{native end-of-sequence token} (\emph{EOS}) as a segment delimiter, rather than adding new vocabulary tokens. This is critical: a pretrained transformer encoders already knows how to produce summary representations at EOS positions, so ESP works without fine-tuning. The attribute segment is placed before the main text:
\begin{align*}
&\langle\text{attribute value}\rangle~\texttt{<EOS>} \\
&\texttt{Instruct: ... query:}~ \langle\text{text}\rangle~\texttt{<EOS>}
\end{align*}
Documents follow an analogous format. Attribute-first ordering is preferred for two reasons. (i)~Under causal attention alone, the regime used in our pretrained model evaluation without encoder modification, the attribute EOS token naturally attends only to attribute tokens because no passage tokens precede it. This provides isolation for the attribute subspace without requiring additional masking. (ii)~When combined with the segment-aware mask below, the partition becomes \emph{symmetric}: both EOS embeddings depend only on their own segments, and Theorem~\ref{thm:partitioned}'s orthogonality assumption holds exactly.

The attribute value on both query and document sides must refer to the \emph{same typed field}, e.g.\ both contain the person name, or both contain the job title. This assumes an upstream entity tagger produces the attribute, which is standard in production retrieval systems. For queries without a tagged attribute, the attribute segment is omitted and ESP becomes standard single-embedding retrieval.

\paragraph{Segment-aware attention masking.}
\label{sec:attn_mask}
We apply a segment-aware causal attention mask in which each EOS token at position $p_i$ attends only to tokens in its own segment, i.e.\ positions $[p_{i-1}+1, \ldots, p_i]$. Both EOS embeddings are therefore restricted to their respective segments: the attribute EOS attends only to attribute tokens, while the semantic EOS attends only to passage tokens. This yields symmetric subspace isolation and satisfies the orthogonal block assumption of Theorem~\ref{thm:partitioned}. Layer normalization, which aggregates across attended positions, also operates within-segment under this mask, so cross-segment leakage cannot occur through any pathway. We use SDPA (not flash attention)~\cite{niu2021review, sanger2026scaled} to support custom 4D masks.

\paragraph{Position-ID reset (RoPE correction).}
\label{sec:rope}
Modern Transformers such as frontier LLMs use Rotary Position Embeddings (RoPE)~\cite{su2024roformer}, which makes hidden states position-dependent. Without correction, the same attribute text at different absolute positions (due to varying passage lengths) produces different embeddings. We fix this by resetting position IDs at each segment boundary: segment $i$ always starts at position 0 regardless of prior segment lengths. This ensures that identical attribute text on the query and document sides receives identical position encodings, producing cosine similarity $\approx 1$ as expected.

\paragraph{Embedding extraction.}
The hidden state $\mathbf{h}_i \in \mathbb{R}^H$ at each segment's EOS position is as the pre-normalization sub-embedding. If dimension allocation is applied, $\mathbf{h}_i$ is truncated to its first $d_i$ dimensions ($\sum_i d_i \leq H$) before L2-normalization to give $\mathbf{e}_i$. \emph{The Transformer realization introduces no additional parameters} beyond the base encoder; the DNN realization (Section~\ref{sec:esp_dnn}) adds $k$ lightweight projection heads.

\paragraph{Ablation summary.}
Entity-first ordering provides causal isolation for free: identical entity text yields cosine $\approx 0.97$ regardless of passage content. Adding the segment-aware mask further improves entity discriminability by $+$7.2pp (92.8\% $\to$ 100\%). Position reset corrects RoPE bias, restoring cosine from $\approx 0.50$ to $\approx 0.73$--$1.0$. We recommend causal-only ESP for pretrained evaluation, adding mask + position reset when fine-tuning.

\subsubsection{DNN Bi-Encoders: Projection Heads}
\label{sec:esp_dnn}

For DNN bi-encoders without causal attention, EOS-based segmentation does not apply; we instantiate the partition through $k$ trainable projection heads on a shared encoder trunk. Each head $h_i$ projects the trunk output into an objective-specific sub-embedding: $\mathbf{e}_i^q = h_i(f(q))$, $\mathbf{e}_i^d = h_i(g(d))$. Independent heads structurally enforce the output partition. LinkedIn's production retrieval system~\cite{zhang2024scaling} uses this design with $k=3$ objectives (engagement, revenue, fitness); see Section~\ref{sec:production} for results.

\section{Open-Source Experiments}
\label{sec:experiments}

We design our experiments to validate the theoretical framework established in Section~\ref{sec:theory}: we measure geometric conflict $\varepsilon$ from real-world data, predict capacity loss, train models with and without ESP, and demonstrate that ESP overcomes the interference ceiling predicted by Theorem~\ref{thm:shared}.

\subsection{Benchmark Construction}
\label{sec:benchmark}

Standard retrieval benchmarks often conflate semantic relevance with keyword matching. To isolate these objectives, we construct a multi-objective benchmark from MS~MARCO~\cite{nguyen2016ms} specifically designed with hard entity negatives.
\paragraph{Entity Extraction.} For each query, we extract multi-word named entities (e.g., ``Martin Luther King,'' ``New York City,'' ``machine learning'') that appear verbatim in the relevant passage. Generic phrases are filtered.
\paragraph{Hard Negatives.} For each query-entity pair, we sample passages from \emph{other queries} that also contain the entity string but answer a different question. For example, a query about ``Martin Luther King speeches'' gets a hard negative about ``Martin Luther King birthday'' --- same entity, different information need. This creates genuine conflict: a pure entity-matching signal would rank both equally, whereas a semantic signal must distinguish between them.
\paragraph{Dataset Statistics.} The benchmark contains 1,993 unique queries split into 13,951 training and 2,093 evaluation (query, positive, hard-negative) examples; the 2,093 eval examples cover 1,359 unique eval queries. Of the eval pairs, \textbf{72\% are hard negatives} (both positive and negative contain the entity) and 28\% are easy negatives (entity absent from negative). All retrieval results in Section~\ref{sec:retrieval_eval} are reported on the 1,359 eval queries. The entity-first input format (Section~\ref{sec:esp_llm}) is used for all methods.

\subsection{Empirical Measurement of Conflict}
\label{sec:conflict_measure}

We first quantify the objective conflict using a pretrained open-source 0.6B embedding model ($H{=}1024$)~\cite{qwen3embedding}. By Kendall's $\tau$ between the semantic ranking (which passage is semantically relevant?) and the entity ranking (which passage contains the entity?), we find:

\[
\tau = 0.34, \quad \varepsilon = 1 - \tau = 0.66
\]

Applying Theorem~\ref{thm:shared} for $k{=}2$ objectives, we predict a maximum effective rank of $r_{\text{eff}} \leq 1024 / (1.66) \approx 616$. This represents a 40\% reduction in available representational capacity for any shared-embedding model attempting to satisfy both objectives simultaneously. 
This is the worst-case theoretical limit. The next two subsections verify the prediction empirically: training (Section~\ref{sec:retrieval_eval}) realizes a smaller but monotonic capacity loss in shared embeddings that no choice of loss weighting or gradient surgery can eliminate, while ESP's partitioned architecture avoids the loss entirely.

\subsection{Retrieval Evaluation}
\label{sec:retrieval_eval}

We evaluate all methods using Semantic NDCG@10 (MS~MARCO relevance) and Entity Recall@10 (fraction of entity-matched passages in top-10) over a corpus of $\sim$39K passages (eval set plus 35K randomly sampled MS~MARCO distractors). For each of 1{,}359 queries, all methods rank the full corpus.
\paragraph{Pretrained baselines (Table~\ref{table:retrieval}).} We compare four retrieval architectures on the pretrained 0.6B model~\cite{qwen3embedding}. Because pretrained MTMH lacks trained projection heads, we approximate per-head specialization by issuing separate prompts for each objective to the same encoder, retrieving the top-$n/2$ results from each corresponding index, and merging the candidate sets downstream.

\begin{table}[tb]
\caption{Pretrained retrieval (0.6B, 39K corpus, 1{,}359 queries). Sem: NDCG@10. Ent: Entity Recall@10. Storage: number of vectors per document.}
\small
\begin{tabular}{lcccc}
\toprule
\textbf{Method} & \textbf{Sem} & \textbf{Ent} & \textbf{Vectors / Doc} & \textbf{Scoring} \\
\midrule
Baseline (no entity) & .668 & .496 & 1 & dot \\
Entity in prompt & .762 & .594 & 1 & dot \\
MTMH${}^{\dagger}$~\cite{wang2025mtmh} & .677 & .833 & 2 & 2$\times$ dot, merge \\
ColBERT~\cite{khattab2020colbert} & .849 & .935 & $L$ token vectors & MaxSim \\
\textbf{ESP (ours)} & \textbf{.843} & \textbf{.932} & 1 & \textbf{dot} \\
\bottomrule
\multicolumn{5}{l}{\small ${}^{\dagger}$Pretrained MTMH simulated via per-objective prompts.}
\end{tabular}
\label{table:retrieval}
\end{table}

Without entity information, the baseline achieves .668 semantic NDCG. Adding entity text to the input via instruction-based routing~\cite{zhang2023llmembedder} (prepending entity to both query and passage) improves to .762/$+$14pp. However this is \emph{task switching}, not multi-objective optimization: the single embedding entangles entity and semantic signals with no mechanism for trade-off. MTMH's $n/2$ merge splits the retrieval budget between objectives, degrading both metrics. 
ColBERT achieves the strongest overall quality, but at substantially higher serving cost. It stores $O(L)$ token vectors per document and performs per token MaxSim interactions instead of a single vector similarity computation, where $L$ denotes document length.
In contrast, ESP achieves entity recall comparable to ColBERT at standard bi-encoder cost: it stores a single concatenated embedding per document composed of $k$ subspaces and computes relevance using one weighted similarity function, served by GPU-accelerated exhaustive KNN~\cite{zhang2024scaling}.
ESP's combined Sem NDCG (.843) exceeds entity-in-prompt (.762) because ESP's subspace isolation prevents entity signal from interfering with semantic ranking.
\paragraph{Trained interference ceiling (Table~\ref{table:trained}).} Trained ColBERT is omitted because its $\sim L$-vectors-per-document storage ($\sim 350$TB) precludes deployment in our target setting; the trained comparison focuses on single-vector and partitioned architectures. Increasing entity emphasis in a single embedding (MTSH) progressively degrades semantic quality while entity recall saturates. PCGrad~\cite{yu2020gradient} gradient surgery \emph{mitigates} the interference ($-4.6$pp semantic vs.\ $-4.9$pp for MTSH $\alpha_e{=}5$) but cannot eliminate it, confirming the bottleneck is \emph{geometric} (embedding capacity), not gradient conflict. ESP breaks through: at $w_e{=}0.8$, it achieves .900 semantic \emph{and} .999 entity --- outperforming the best MTSH at matched entity recall by $+$2.5pp semantic NDCG ($p < 0.001$, bootstrap 95\% CI [$+$1.6, $+$3.5] over 1{,}359 queries), without requiring retraining.

\begin{table}[tb]
\caption{Trained models (0.6B, 1 epoch, 39K corpus, 1{,}359 queries). MTSH: multi-task single-head with entity weight $\alpha_e$. ESP weight $w_e$ is tuned at serving time (no retraining). $\pm$ indicates std across queries.}
\small
\begin{tabular}{lcc}
\toprule
\textbf{Method} & \textbf{Sem NDCG@10} & \textbf{Ent Recall@10} \\
\midrule
Baseline (no entity) & .870 $\pm$ .220 & .679 \\
MTSH ($\alpha_e{=}1$) & \textbf{.924} $\pm$ .164 & .887 \\
MTSH ($\alpha_e{=}5$) & .875 $\pm$ .208 & 1.000 \\
MTSH ($\alpha_e{=}10$) & .823 $\pm$ .240 & 1.000 \\
MTSH + PCGrad~\cite{yu2020gradient} & .879 $\pm$ .205 & 1.000 \\
\midrule
\textbf{ESP} ($w_e{=}0.3$) & .899 $\pm$ .188 & .908 \\
\textbf{ESP} ($w_e{=}0.5$) & .900 $\pm$ .183 & .986 \\
\textbf{ESP} ($w_e{=}0.8$) & \textbf{.900} $\pm$ .183 & \textbf{.999} \\
\bottomrule
\end{tabular}
\label{table:trained}
\end{table}

\subsection{Pareto Frontier: A Single-Model Solution}
\label{sec:weight_sweep}

Beyond capacity loss, shared embeddings suffer a practical limitation: they provide \textbf{no mechanism to control the trade-off} between objectives at serving time, and any emphasis must be baked into training loss weights. ESP resolves this with serving-time control. Table~\ref{table:sweep} compares ESP's weight sweep (one trained model) against MTSH's Pareto frontier (each point requires a separate training run).

\begin{table}[tb]
\caption{Pareto comparison (1{,}359 queries): ESP (one model, vary $w_e$ at serving time) vs.\ MTSH (retrain for each $\alpha_e$). Rows ordered by increasing entity emphasis.}
\small
\begin{tabular}{cc|cc}
\toprule
\multicolumn{2}{c|}{\textbf{ESP (one model)}} & \multicolumn{2}{c}{\textbf{MTSH (retrain each)}} \\
Sem & Ent & Sem & Ent \\
\midrule
.870 ($w_e{=}0$) & .693 & .870${}^{\dagger}$ & .679 \\
.899 ($w_e{=}0.3$) & .908 & .924 ($\alpha_e{=}1$) & .887 \\
.900 ($w_e{=}0.5$) & .986 & .875 ($\alpha_e{=}5$) & 1.000 \\
.900 ($w_e{=}0.8$) & .999 & .823 ($\alpha_e{=}10$) & 1.000 \\
\bottomrule
\multicolumn{4}{l}{\small ${}^{\dagger}$Baseline model (no entity in training).}
\end{tabular}
\label{table:sweep}
\end{table}

ESP traces a smooth Pareto frontier using a single trained model. At comparable levels of entity recall, ESP consistently matches or exceeds the NDCG achieved by MTSH, while MTSH requires retraining for each operating point. The value of ESP lies not in any single operating point but in the ability to trace the  Pareto frontier from a single trained model, a capability MTSH  lacks.
\paragraph{Semantic NDCG plateau and signal complementarity.} ESP's semantic NDCG is notably stable across $w_e \in [0.3, 0.8]$ (.899 to .900). This plateau reflects \emph{signal complementarity}, not absence of conflict. At $w_e{=}0$ (semantic only), NDCG is .870; adding entity weight to $w_e{=}0.3$ \emph{improves} semantic NDCG by +2.9pp, because entity-matched passages in our benchmark are correlated with semantic relevance. Within the operating range $[0.3, 0.8]$, the passages promoted by increasing $w_e$ are entity-matched \emph{and} semantically relevant, so semantic quality is maintained. At the extreme ($w_e{=}1.0$), semantic NDCG collapses to .583, confirming the trade-off exists but only manifests when entity signal overwhelms semantics entirely. This plateau is precisely what subspace isolation enables: because semantic and entity scores are computed independently, entity emphasis cannot \emph{corrupt} the semantic ranking --- it can only \emph{reweight} the combined score. In MTSH, where both signals share one embedding, increasing entity emphasis degrades semantic quality monotonically (0.924 $\to$ 0.823) because the signals interfere geometrically.

\subsection{Dimension Allocation}
\label{sec:dim_sweep}

We sweep $d_{\text{ent}}$ from 0 to $H$ by truncating the hidden state at each EOS, keeping $d_{\text{sem}} = H - d_{\text{ent}}$. With entity-first format providing perfect subspace isolation (cosine $\approx 1.0$ for identical entities):
\paragraph{Entity matching requires trivial dimensionality.} Both pretrained and trained models achieve 100\% entity accuracy at just $d_{\text{ent}} = 32$ (3\% of $H{=}1024$), matching the resolution-gap argument of Section~\ref{sec:dim_alloc}: with $\sim$780 unique entities, $O(\log 780) \approx 10$ bits are needed --- well within 32 dimensions.

\begin{figure}[tb]
\centering
\includegraphics[width=\columnwidth]{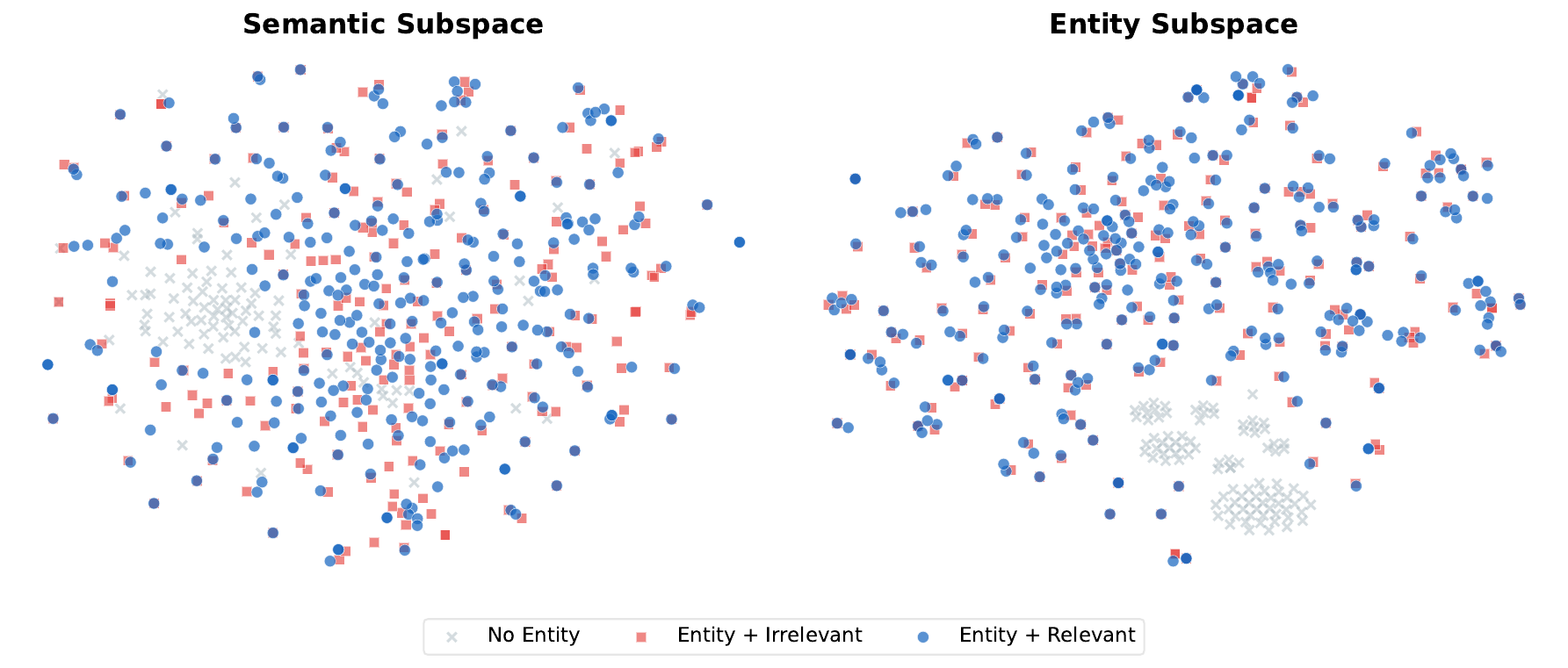}
\caption{t-SNE of ESP subspace embeddings. \textbf{Left}: Semantic subspace --- entity presence is entangled with content. \textbf{Right}: Entity subspace --- same-entity passages cluster tightly regardless of relevance.}
 \Description{Two side-by-side t-SNE scatter plots of passage
  embeddings. In the left plot, passages associated with the same
  entity are mixed with passages representing different semantic
  content. In the right plot, passages associated with the same
  entity form compact and clearly separated clusters, showing that
  entity identity is isolated in the entity subspace.}
\label{fig:tsne}
\end{figure}

Figure~\ref{fig:tsne} visualizes the subspace separation: entity identity is entangled with semantic content in the shared embedding (left), but cleanly separable in its own subspace (right), confirming entity matching is a low-dimensional signal trivial to isolate.

\subsection{Single-Objective Generality}
\label{sec:beir}

A multi-objective architecture is only useful if it does not regress on single-objective retrieval. We verify ESP's behavior in this regime by evaluating the pretrained model (with segment mask and position reset) on three BEIR benchmarks~\cite{thakur2021augmented} using a ``none'' entity prefix---reducing ESP to a standard bi-encoder. ESP shows $\leq$0.3pp variation vs.\ baseline across NFCorpus, SciFact, and FiQA, indicating no meaningful regression: the entity subspace embedding becomes constant across all documents and does not influence ranking.

\subsection{Scaling to Three Objectives
  (\texorpdfstring{$k=3$}{k=3})}
\label{sec:k3}

To test ESP's generality beyond $k{=}2$, we add a third objective: \emph{passage length preference}. We bucket passages into short ($<$30 words), medium (30--80), and long ($>$80), and set the query preference to ``medium'' (the most informative length). The input format becomes: \texttt{<length><EOS><entity><EOS><passage><EOS>}, producing three subspace embeddings scored by $S = w_s \cdot s_{\text{sem}} + w_e \cdot s_{\text{ent}} + w_l \cdot s_{\text{len}}$.

\begin{table}[tb]
\caption{$k{=}3$ ESP: semantic + entity + length preference (pretrained 0.6B, 1{,}359 queries). \textbf{Sem}: Semantic NDCG@10. \textbf{Ent}: Entity Recall@10. \textbf{Len}: Length Match@10 (fraction of top-10 matching preferred length).}
\small
\begin{tabular}{lccc}
\toprule
\textbf{Weights} $(w_s, w_e, w_l)$ & \textbf{Sem} & \textbf{Ent} & \textbf{Len} \\
\midrule
(1.0, 0.0, 0.0) sem only & .867 & .687 & .812 \\
(0.5, 0.5, 0.0) $k{=}2$ & .861 & .985 & .858 \\
\midrule
(0.5, 0.3, 0.2) $k{=}3$ balanced & .843 & .957 & .912 \\
(0.4, 0.4, 0.2) $k{=}3$ ent+len & .824 & .969 & .912 \\
(0.5, 0.2, 0.3) $k{=}3$ heavy len & .830 & .918 & .931 \\
\bottomrule
\end{tabular}
\label{table:k3}
\end{table}

We use length preference as a \emph{toy third objective} to demonstrate that ESP's serving-time control extends to $k > 2$; the production system in Section~\ref{sec:production} validates $k{=}3$ on three high-stakes objectives (engagement, revenue, fitness). At $(0.5, 0.3, 0.2)$, the model achieves .843 semantic, .957 entity, and .912 length match to simultaneously control three objectives from a single pretrained model with no additional training. Compared to $k{=}2$ at $(0.5, 0.5, 0)$, adding the length objective costs $-$1.8pp semantic and $-$2.8pp entity while gaining $+$5.4pp length match. This is a smooth, predictable trade-off rather than catastrophic interference.

\section{Production Validation at LinkedIn}
\label{sec:production}


While the preceding experiments on Transformer-based architectures validate the geometric principles and capacity benefits of ESP, we now demonstrate the framework's industrial viability and scalability. We deployed ESP within LinkedIn’s job matching platform, a high-throughput system serving over 70M weekly active users. It utilizes the DNN-ESP variant, employing $k{=}3$ dedicated projection heads branching from a shared encoder trunk. By transitioning from the controlled benchmark to a live production environment, we validate that the partitioning principle remains robust across different architectures and delivers business impact at scale.

\subsection{Objectives and Labels}

In our production system, we jointly optimize 3 objectives.

\paragraph{Engagement.} We utilize the user interaction data (135M clicks, 12M applies, 3M saves over 30 days) with in-batch negative sampling~\cite{Hidasi2106} and random easy negatives~\cite{yang2020mixed}. The loss is InfoNCE with temperature $T=0.01$.
\paragraph{Revenue.} We estimate revenue as $P(\text{click} | \text{impression}) \times \text{bid}$ under CPC pricing. We leverage knowledge distillation~\cite{gou2021knowledge} from a high-precision L2 ranker to generate soft $\text{pClick}$ labels via counterfactual data~\cite{bottou2013counterfactual}. Training uses MSE loss against these pseudo-labels.
\paragraph{Fitness.} We utilize soft labels generated from an LLM fine-tuned on 25K human annotations, quantifying how well a job matches a seeker's qualifications. Binary labels are derived by thresholding, and training uses BCE loss.

\subsection{Offline Results: ESP vs.\ MTSH}

Table~\ref{table:prod_moo} compares architectures for jointly optimizing engagement and revenue. MTSH uses a single embedding with a weighted training loss ($\alpha$: revenue loss weight). JRC~\cite{sheng2023joint} is a single-vector baseline that learns dual member embeddings for joint recall and calibration; like MTSH it cannot retune the engagement/revenue trade-off at serving time. ESP uses partitioned subspaces with per-objective serving-time weights ($w_{\text{rev}}, w_{\text{eng}}$).

MTSH shows severe interference at scale: emphasizing revenue ($\alpha{=}0.99$) loses 28.24\% in-batch recall and 21.19\% offline KNN recall. \emph{At matched revenue emphasis} (MTSH $\alpha{=}0.99$ vs.\ ESP $w_{\text{rev}}{=}0.99$), MTSH loses 28.24\% in-batch recall while ESP loses only 0.54\%, a $\approx 50{\times}$ improvement that quantifies the interference that partitioning eliminates at production scale. JRC mitigates the recall loss ($-0.03\%$) but inherits MTSH's lack of serving-time tunability and pays a large MSE penalty ($+5307.25\%$).

\begin{table}[tb]
\centering
\caption{Production multi-objective co-training results. MTSH: single embedding, weighted training loss ($\alpha$: revenue weight). ESP: partitioned embedding with per-objective serving weights ($w_{\text{rev}}, w_{\text{eng}}$). \emph{MSE diffs} are relative to the revenue-only baseline (MTSH $\alpha{=}0.99$); \emph{recall diffs} are relative to the engagement-only baseline (MTSH $\alpha{=}0.01$). 
}
\setlength{\tabcolsep}{3pt}
\begin{tabular}{@{}lccc@{}}
\toprule
\textbf{Model} & \textbf{\shortstack{MSE diff\\(\%)}} & \textbf{\shortstack{in-batch-2048\\@10 diff (\%)}} & \textbf{\shortstack{offline-KNN\\@6400 diff (\%)}} \\
\midrule
MTSH ($\alpha{=}0.01$) & +6905.80 & 0.00 & 0.00 \\
MTSH ($\alpha{=}0.99$) & 0.00 & $-$28.24 & $-$21.19 \\ \hline
JRC~\cite{sheng2023joint} & +5307.25 & $-$0.03 & $-$0.45 \\ \hline
ESP $(0.5, 0.5)$ & +52.17 & $-$1.10 & $-$1.29 \\
ESP $(0.8, 0.2)$ & +42.03 & $-$0.67 & $-$1.21 \\
ESP $(0.9, 0.1)$ & +23.19 & $-$0.84 & $-$0.76 \\
ESP $(0.99, 0.01)$ & +4.35 & $-$0.54 & $-$0.17 \\
\bottomrule
\end{tabular}
\label{table:prod_moo}
\end{table}

\paragraph{Scaling to $k{=}3$ in production.}
We add a fitness subspace alongside engagement and revenue. The fitness subspace achieves AUC 0.938 on held-out fitness annotations, while engagement and revenue metrics remain within parity of the $k{=}2$ ESP configuration above. This validates that ESP's per-subspace partitioning scales to three high-stakes objectives without measurable cross-objective regression, and provides the production counterpart to the open-source $k{=}3$ proof-of-concept in Section~\ref{sec:k3}.

\subsection{Online A/B Test}

We deployed the ESP model trained with $\alpha=0.99$ and varied the serving-time weight $w_{d_1}$ for revenue, with $w_{d_2}=1-w_{d_1}$ controlling engagement. Table~\ref{table:online_ab} reports statistically significant results ($p < 0.05$) from a 7-day online A/B test.

\begin{table}[tb]
\caption{Online A/B Test ($d_1$: revenue, $d_2$: engagement)}
\begin{tabular}{lccc}
\toprule
\textbf{Weight $w_{d_1}$} & \textbf{Revenue} & \textbf{Job Application} & \textbf{Click Rate} \\
\midrule
0.1 & +0.91\% & +3.30\% & +0.04\% \\
0.2 & +2.31\% & +3.20\% & +0.37\% \\
0.3 & +4.01\% & +3.95\% & +0.92\% \\
0.5 & +7.36\% & +1.88\% & +0.61\% \\
\bottomrule
\end{tabular}
\label{table:online_ab}
\end{table}

At $w_{d_1}=0.3$ (the chosen production operating point), the system achieves simultaneous gains across all metrics; post-ramp analysis at this configuration shows an 8.95\% reduction in the no-fit rate --- the rate at which recommended jobs are flagged as poorly matched to seeker qualifications by downstream filters and human review. \emph{A single trained ESP model spans four production operating points without retraining}, with revenue gains ranging from $+0.91\%$ to $+7.36\%$ and complementary engagement effects. The serving-time knob translates directly into deployable business levers, allowing product teams to rebalance objectives in response to evolving priorities without model retraining or infrastructure changes.

\section{Discussion and Future Work}
\label{sec:discussion}
\paragraph{Serving Latency (Transformer-ESP)} 
For the Transformer-ESP variant evaluated on the open-source benchmark (0.6B encoder on H200 GPUs), ESP introduces zero latency overhead: the causal-only variant (flash attention~\cite{dao2022flashattention}) adds $<$0.1\% over standard encoding, and the masked variant (SDPA)~\cite{niu2021review, sanger2026scaled} is 12\% \emph{faster} than flash attention on short sequences (22.4ms vs.\ 19.6ms per document). The custom 4D mask adds negligible cost beyond the SDPA switch itself. The production DNN-ESP system is much smaller and uses a different inference stack; we report production retrieval-quality results in Section~\ref{sec:production} and refer readers to~\cite{zhang2024scaling} for system-level latency.
\paragraph{Attribute Tagging as a Separate Problem}
ESP decomposes multi-objective retrieval into two sub-problems: (1)~entity/attribute tagging (upstream), and (2)~partitioned embedding matching (ESP). The tagging model extracts the same typed attribute on both query and document sides. This is standard practice in production: job search systems extract job titles, e-commerce systems extract brand names. It is not a limitation of ESP but rather a natural pipeline decomposition. For queries without tagged attributes, the attribute segment is omitted and ESP gracefully reduces to standard single-embedding retrieval.
\paragraph{Infrastructure Prerequisite: Exhaustive KNN}
ESP's weighted dot product $\sum_i w_i \langle \mathbf{e}_i, \mathbf{e}_i' \rangle$ is incompatible with partition-based ANN indices such as IVFPQ~\cite{jegou2010product}, whose clustering and residual quantization presuppose a single coherent embedding geometry. A concatenation of independently partitioned subspaces does not satisfy this assumption. Recent advances in GPU accelerated exhaustive retrieval~\cite{zhang2024scaling, malkov2018efficient} make ESP practical at web scale by enabling exact serving time reweighting over a unified index. As a result, ESP should be viewed as a joint model and systems design enabled by modern computation hardware.
%


\paragraph{Future Work}

ESP opens several promising directions for future research in multi-objective retrieval. First, an important next step is evaluating settings where multiple objectives simultaneously require high-resolution embeddings, such as balancing semantic precision against topical diversity. Developing benchmarks with such high-entropy objective conflicts would provide a stronger test of the representational limits estimated by our theory.
Second, we plan to improve the calibration of per-subspace scores using techniques such as post-hoc scaling or isotonic regression~\cite{barlow1972isotonic}, which could simplify control over Pareto-optimal operating points as the number of objectives $k$ grows. Extending ESP to larger numbers of heterogeneous objectives also remains an important practical direction.
Finally, several theoretical and systems challenges remain open, including tightening the dimension allocation bounds in Theorem~\ref{thm:allocation}, improving cross-objective gradient flow in the shared encoder trunk, and developing more efficient segment-aware attention masking compatible with FlashAttention-style~\cite{dao2022flashattention} kernels. We also plan to continue improving the benchmark infrastructure introduced in this work to support more rigorous and reproducible research on multi-objective retrieval systems.

\section{Conclusion}
\label{sec:conclusion}

We formalized the geometric limitation of single-embedding multi-objective bi-encoder retrieval and proposed ESP. The core practical insight is the \emph{no-knob} problem: a single embedding provides no mechanism to control the trade-off between competing objectives at serving time, and the interference ceiling is architectural, not a training artifact. ESP resolves this by decomposing the embedding into isolated subspaces with a tunable weighted dot product. Unlike multi-head retrieval architectures that require per-objective ANN indices and a post-hoc merge, ESP serves via a single concatenated index and a single weighted dot product, enabled by GPU-accelerated exhaustive KNN~\cite{zhang2024scaling}. In production at LinkedIn's job matching platform (70M+ weekly active users), ESP yields $+$4.0\% revenue, $+$3.95\% job applications, and $-$8.95\% no-fit rate from a single trained model spanning multiple operating points without retraining.
%


\begin{acks}
We thank the many talented engineers across LinkedIn whose dedication made this work possible, and our Product and Engineering leadership for their support.  
\end{acks}

\balance

\bibliographystyle{ACM-Reference-Format}
\bibliography{esp_references}

\end{document}